\newtheorem{defn}{Definition}
\newtheorem{thm}{Theorem}
\providecommand{\nn}{\nonumber}
\providecommand{\be}{\begin{equation}}
  \providecommand{\ee}{\end{equation}}
\providecommand{\bea}{\begin{eqnarray}}
  \providecommand{\eea}{\end{eqnarray}}
\providecommand{\beas}{\begin{eqnarray*}}
  \providecommand{\eeas}{\end{eqnarray*}}
\providecommand{\beni}{\begin{equation*}}
  \providecommand{\eeni}{\end{equation*}}
\providecommand{\bw}{\begin{widetext}}
  \providecommand{\ew}{\end{widetext}}
\date{}
\begin{document}
\author{Michele Castellana\footnote{Lewis-Sigler Institute for Integrative Genomics, Princeton University, Princeton, New Jersey 08544, United States.}, Adriano Barra\footnote{Dipartimento di Fisica, Sapienza Universit\`{a} di Roma, Roma, Italy.}, Francesco Guerra\footnote{Dipartimento di Fisica, Sapienza Universit\`{a} di Roma and INFN Sezione di Roma, Roma, Italy.}}
\title{Free-energy bounds for hierarchical spin models}
\maketitle

\begin{abstract}
  In this paper we study two non-mean-field spin models built on a hierarchical lattice: The hierarchical Edward-Anderson model (HEA) of a spin glass,  and Dyson's hierarchical model (DHM) of a ferromagnet. For the HEA, we prove the existence of the thermodynamic limit of the free energy and the replica-symmetry-breaking (RSB) free-energy bounds previously derived for the Sherrington-Kirkpatrick model of a spin glass.
  These RSB mean-field bounds are exact only if the order-parameter fluctuations (OPF) vanish: Given that such fluctuations are not negligible in non-mean-field models, we develop a novel strategy to tackle part of OPF in hierarchical models. The method is based on absorbing part of OPF of a block of spins into an effective Hamiltonian of the underlying spin blocks. We illustrate this method for DHM and show that, compared to the mean-field bound for the free energy, it provides a tighter non-mean-field bound, with a critical temperature closer to the exact one. To extend this method to the HEA model, a suitable generalization of Griffith's correlation inequalities for Ising ferromagnets is needed: Since correlation inequalities for spin glasses are still an open topic, we leave the extension of this method to hierarchical spin glasses as a future perspective.
\end{abstract}

\section{Introduction}

The mean-field (MF) picture of spin glasses has been extensively studied in the last few decades, and it is now mostly understood at a rigorous level \cite{panchenko2013parisi}.  In particular, the replica-symmetry-breaking (RSB) free-energy picture originally proposed by Parisi  \cite{parisi1983order} for the MF Sherrington-Kirkpatrick (SK) model has been proved to be a rigorous upper bound for the SK free energy in \cite{guerra2003broken}.  Later on, this bound has been shown to be exact in the thermodynamic limit \cite{talagrand2006parisi}. Despite the remarkable progress in understanding the MF picture, the non-mean-field (NMF) scenario of spin glasses is still a  source of debate \cite{young2006numerical}.\\

Among the NMF models of spin glasses, the hierarchical Edward-Anderson model (HEA) has attracted particular interest in recent years \cite{franz2009overlap,castellana2011real,angelini2013ensemble}. The HEA is natural extension of a NMF model of a ferromagnet, Dyson's hierarchical model  (DHM) \cite{dyson1969existence}. In DHM, the ferromagnetic spin-spin couplings are disposed in a hierarchical way: This arrangement  of the couplings allows for a recursive structure which makes DHM particularly suitable for the implementation of renormalization-group methods \cite{bleher1973investigation}. The HEA shares with DHM this hierarchical coupling structure, but it differs from DHM in the nature of the couplings: While DHM has only ferromagnetic--i.e. positive--couplings, in the HEA spin-spin couplings are random variables taking both positive and negative values, thus implying frustration.\\

In this paper, we provide rigorous free-energy bounds for DHM and HEA. For the HEA, we frist prove the existence of the thermodynamic limit of the free energy and its self-averaging property, and then we extend the RSB bound for the MF SK model to the HEA. Given that this MF bound is exact only if the order-parameter fluctuations (OPF) vanish, we provide a new scheme that leverages the hierarchical structure of the model to account for OPF,  thus improving upon the MF bound. In this new scheme, OPF of a hierarchical spin block are absorbed into an effective Hamiltonian of the underlying blocks. We explicitly test this idea for DHM and show that, compared to the MF bound, this new scheme provides  a tighter NMF bound. As a consequence, the NMF-bound critical temperature is closer to the exact value compared to that of the MF bound \cite{griffiths1967correlationsIII}. \\

Given that the proof of the NMF bound for DHM makes use of well-known correlations inequalities for ferromagnetic systems \cite{griffiths1967correlationsII}, to generalize this method to the HEA a suitable generalization of the correlation inequalities to spin glasses is needed. We leave this correlation-inequality extension as a topic of future research \cite{morita2004griffiths,contucci2007correlationI}. If extended to the HEA, our method could provide a novel NMF  bound for the free energy, providing a novel guidance in understanding the low-temperature features of NMF spin glasses.

\section{Hierarchical Edwards-Anderson Model}

The HEA model is a system of $2^{k+1}$ Ising spins $S_i = \pm 1$ labeled by index $i=1,2,\cdots, 2^{k+1}$, whose Hamiltonian $H_{k+1}[\vec{S}]$ is introduced recursively by the following

\begin{defn}
  \label{def1} The Hamiltonian of the hierarchical Edwards-Anderson model (HEA) is defined by
  \be\label{eq10}
  H_{k+1}[\vec{S}]=H_{k}^{1}[\vec{S}_{1}]+H_{k}^{2}[\vec{S}_{2}]-\frac{1}{2^{(k+1)\sigma}}\sum_{i<j=1}^{2^{k+1}}J_{ij}S_{i}S_{j},
  \ee
where $\vec{S}_1 \equiv \{S_i\}_{1 \leq i \leq 2^k}$,  $\vec{S}_2 \equiv \{S_i\}_{2^k+1 \leq i \leq 2^{k+1}}$,  $H_{0}[S]=0$, $J_{ij}$ are  independent and identically distributed (IID)  Gaussian random variables with zero mean and unit variance, and $\sigma$ is a number.
\end{defn}

It is important to point out that the number $\sigma$ in Definition \ref{def1} determines how fast the spin-spin interactions decrease with distance: The larger $\sigma$, the faster the interactions decrease.\\

Let us now prove the existence of the thermodynamic limit for the quenched free energy of the HEA. By using the scheme proposed in  \cite{guerra2002thermodynamic} in a recursive way adapted to the hierarchical structure of the model, we obtain the following

\begin{thm}\label{thm1}
  If $\sigma>1/2$, given a Gaussian random variable $h$ and $2^{k+1}$ IID copies $h_{1},\ldots,h_{2^{k+1}}$ of $h$, let us introduce the free energy
  \[
  f_{k+1}\equiv\frac{1}{2^{k+1}}\mathbb{E}\left[\log\sum_{\vec{S}}\exp\left(-\beta H_{k+1}[\vec{S}]+\sum_{i=1}^{2^{k+1}}h_{i}S_{i}\right)\right],
  \]
  where the inverse-temperature $\beta$ is a non-negative number, and $\mathbb{E}[]$ denotes the expectation with respect to all random variables.
  \newline
  Then, $f \equiv \lim_{k\rightarrow\infty}f_{k+1}$ exists.
\end{thm}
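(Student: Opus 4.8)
The plan is to adapt the Guerra--Toninelli interpolation to the recursive structure of the model, comparing the level-$(k+1)$ free energy $f_{k+1}$ directly with the level-$k$ free energy $f_k$, and to show that the resulting sequence is monotone and bounded. The natural interpolation switches on only the top-level coupling introduced at step $k+1$, leaving the two sub-block Hamiltonians untouched. Concretely, for $t\in[0,1]$ I would set
\[
\mathcal{H}_t[\vec S]=H_k^1[\vec S_1]+H_k^2[\vec S_2]-\frac{\sqrt{t}}{2^{(k+1)\sigma}}\sum_{i<j=1}^{2^{k+1}}J_{ij}S_iS_j,
\]
and define $\phi(t)=\frac{1}{2^{k+1}}\mathbb{E}[\log\sum_{\vec S}\exp(-\beta\mathcal{H}_t[\vec S]+\sum_i h_iS_i)]$. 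At $t=1$ one recovers $\phi(1)=f_{k+1}$, while at $t=0$ the two blocks decouple, the partition function factorizes into two independent level-$k$ systems carrying their own i.i.d. fields, and the field term $\sum_{i=1}^{2^{k+1}}h_iS_i$ splits cleanly across the halves, so that $\phi(0)=f_k$.

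Next I would differentiate $\phi$ in $t$ and integrate by parts in the Gaussian couplings $J_{ij}$. Since $\partial_{J_{ij}}\langle S_iS_j\rangle_t$ produces the connected correlation $\beta\sqrt t\,2^{-(k+1)\sigma}(1-\langle S_iS_j\rangle_t^2)$, the factors of $\sqrt t$ cancel and one is left with
\[
\phi'(t)=\frac{\beta^2}{2^{k+2}\,2^{2(k+1)\sigma}}\sum_{i<j=1}^{2^{k+1}}\mathbb{E}\bigl[1-\langle S_iS_j\rangle_t^2\bigr].
\]
Because $\langle S_iS_j\rangle_t^2\le 1$, every summand is non-negative, so $\phi'(t)\ge0$ and hence $f_{k+1}\ge f_k$: the sequence is non-decreasing. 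The cancellation of the $1/\sqrt t$ from the derivative against the $\sqrt t$ from integration by parts removes any apparent singularity at $t=0$, so this step is routine.

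To upgrade monotonicity to convergence I would bound the same derivative from above by discarding the non-negative $\langle S_iS_j\rangle_t^2$ term, giving $\phi'(t)\le \beta^2 2^{-(k+2)}2^{-2(k+1)\sigma}\binom{2^{k+1}}{2}$. Integrating over $t\in[0,1]$ yields an increment bound of the form $f_{k+1}-f_k\le C\beta^2\,2^{k(1-2\sigma)}$ for an explicit constant $C$. This is exactly where the hypothesis $\sigma>1/2$ enters: the exponent $1-2\sigma$ is negative, the increments are summable, and the telescoping identity $f_{k+1}=f_{k_0}+\sum_{m=k_0}^{k}(f_{m+1}-f_m)$ stays bounded above. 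A non-decreasing sequence that is bounded above converges, which establishes the existence of $f=\lim_{k\to\infty}f_{k+1}$; note that the doubling $N_{k+1}=2N_k$ lets me conclude by direct monotone convergence rather than by the Fekete lemma used in the i.i.d.\ subadditive setting.

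The main obstacle I anticipate is not the sign computation but the bookkeeping at the two endpoints. One must verify carefully that $H_k^1$ and $H_k^2$ are genuinely independent, identically distributed copies of the level-$k$ Hamiltonian, so that the $t=0$ factorization reproduces exactly $f_k$ and not some perturbed object, and that the i.i.d.\ external fields split to match the level-$k$ definition. Equally important is confirming that the top-level couplings $J_{ij}$ are independent of all lower-level disorder entering $H_k^1,H_k^2$ and of the fields $h_i$, since this independence is what legitimizes the Gaussian integration by parts. Once these structural points are pinned down, the monotonicity and the $\sigma>1/2$ summability bound combine immediately to give the claim.
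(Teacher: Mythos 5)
Your proposal is correct and follows essentially the same route as the paper: the same $\sqrt{t}$-interpolation that switches on only the top-level couplings, Gaussian integration by parts to get a non-negative derivative bounded above by $O(\beta^2 2^{(k+1)(1-2\sigma)})$, and then monotonicity plus summability of the geometric increments for $\sigma>1/2$. The only cosmetic differences are that you express the derivative through the site-pair correlations $1-\langle S_iS_j\rangle_t^2$ where the paper repackages the same quantity as the replica-overlap term $1-\mathbb{E}[\Omega(R_{12}^2)_t]$, and you telescope from a generic finite level $k_0$ instead of iterating down to the explicit one-spin free energy $\mathbb{E}[\log 2\cosh(h)]$; both choices are equivalent.
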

\begin{proof}
  Consider an interpolating parameter $0 \leq t \leq 1$ and the Hamiltonian
  \be \label{eq27}
  H_{k+1,t}[\vec{S}] \equiv - \frac{\sqrt{t}}{2^{(k+1)\sigma}}\sum_{i<j=1}^{2^{k+1}} J_{ij} S_i S_j +  H_{k}^{1}[\vec{S}_{1}]+H_{k}^{2}[\vec{S}_{2}].
  \ee
  The  partition function and  free energy related to the Hamiltonian (\ref{eq27}) are
  \bea
  Z_{k+1,t} & \equiv & \sum_{\vec{S}} \exp\left(-\beta  H_{k+1,t}[\vec{S}] + \sum_{i=1}^{2^{k+1}} h_i S_i \right),\\
  \phi_{k+1,t} & \equiv & \frac{1}{2^{k+1}} \mathbb{E}[ \log Z_{k+1,t}].
  \eea
  For $t=1$, $\phi_{k+1,t}$ equals the free energy of the original model
  \be \label{eq1}
  \phi_{k+1,1} = f_{k+1},
  \ee
  while for $t=0$, $\phi_{k+1,t}$ is given by the free energy of two independent HEAs with $2^k$ spins: By using Definition \ref{def1} for the HEA Hamiltonian, this is exactly $f_k$:
  \be\label{eq2}
  \phi_{k+1,0} = f_k.
  \ee

  To interpolate between $\phi_{k+1,1}$ and $\phi_{k+1,0}$, we compute the derivative of $\phi_{k+1,t}$ with respect to $t$. By integrating by parts over the Gaussian variables $J_{ij}$, it is easy to show that
  \bea\label{eq3}
  \frac{d \phi_{k+1,t} }{dt} &=& \frac{\beta}{2 \sqrt{t} 2^{(k+1)(1+ \sigma)}} \sum_{i<j=1}^{2^{k+1}} \mathbb{E}[J_{ij} \Omega( S_i S_j)_t] \\ \nn
  & = & \frac{\beta^2}{4} 2^{(k+1)(1-2\sigma)}(1- \mathbb{E}[\Omega( R_{12}^2 )_t]),
  \eea
  where $R_{12} \equiv \frac{1}{2^{k+1}} \sum_{i=1}^{2^{k+1}} S^1_i S^2_i$ is the overlap between two independent replicas $\vec{S}^1$, $\vec{S}^2$ and $\Omega$ is the Boltzmann average over the two replicas
  \be
  \Omega( \cdot ) \equiv \frac{ \sum_{\vec{S}^1 \vec{S}^2} \exp\left[ -(H_{k+1,t}[\vec{S}^1]+H_{k+1,t}[\vec{S}^2]) + \sum_{i=1}^{2^{k+1}} h_i (S^1_i+S^2_i) \right]}{Z_{k+1,t}^2}.
  \ee
  From Eq. (\ref{eq3}) we obtain an upper and a lower bound for the derivative of $\phi_{k+1,t}$
  \be \label{eq7}
  0 \leq  \frac{d \phi_{k+1,t} }{dt}  \leq \frac{\beta^2}{4} 2^{(k+1)(1-2\sigma)}.
  \ee

  Putting together Eqs. (\ref{eq1}), (\ref{eq2}) and the upper bound in  Eq.  (\ref{eq7}) we obtain
  \be\label{eq4}
  f_{k+1} = f_k + \int_0^1 \frac{d \phi_{k+1,t} }{dt}  dt   \leq  f_k+  \frac{\beta^2}{4} 2^{(k+1)(1-2\sigma)},
  \ee
  while the lower bound in Eq. (\ref{eq7}) implies
  \be\label{eq8}
  f_{k+1} \geq f_k.
  \ee
  We can now use the recursive structure of HEA to establish the final result: Following a method originally used for the ferromagnetic version of the HEA \cite{dyson1969existence}, we iterate Eq. (\ref{eq4}) for $k+1, k, k-1, \cdots, 0$. As we reach  $k=0$, we are left with the free energy of a one-spin HEA that we can  compute explicitly
  \bea\label{eq5}
  f_{k+1} &\leq& \frac{\beta^2}{4} 2^{(k+1)(1-2\sigma)} + \frac{\beta^2}{4} 2^{k(1-2\sigma)} + f_{k-1} \\ \nn
  & \leq & \cdots\\ \nn
  & \leq & \frac{\beta^2}{4} \sum_{l=1}^{k+1} 2^{l(1-2\sigma)} + \mathbb{E}[\log 2 \cosh(h)].
  \eea
  Since here $\sigma > 1/2$, we have
  \be \label{eq6}
  \sum_{l=1}^{k+1} 2^{l(1-2\sigma)}  \leq \frac{1}{1-2^{1-2\sigma}} < \infty\;\;\;\;\; \forall k\geq 0.
  \ee
  Putting together Eqs. (\ref{eq5}), (\ref{eq6}) we obtain that the sequence $k \rightarrow f_{k+1}$ is bounded  above
  \be
  f_{k+1} \leq \frac{\beta^2}{4}  \frac{1}{1-2^{1-2\sigma}} +\mathbb{E}[\log 2 \cosh(h)] <  \infty \;\;\; \;\; \forall k \geq 0,
  \ee
  and from Eq. (\ref{eq8}) we have that the sequence $k \rightarrow f_{k+1}$ is non decreasing, implying that $\lim_{k\rightarrow \infty} f_{k+1}$ exists.
\end{proof}

Based on previous results on the Sherrington-Kirkpatrick model, it is also easy to show that the free energy of the HEA is self-averaging in the thermodynamic limit

\begin{thm}\label{thm2}
  For $\sigma > 1/2$, the free energy of the HEA is self-averaging in the thermodynamic limit
  \be
  \lim _{k \rightarrow \infty} \frac{1}{2^{k+1}}\log\sum_{\vec{S}}\exp\left(-\beta H_{k+1}[\vec{S}]+\sum_{i=1}^{2^{k+1}}h_{i}S_{i}\right)  = f , \; \;  \textrm{ with probability } 1 .
  \ee
\end{thm}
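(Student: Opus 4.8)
The plan is to establish almost-sure convergence through a Gaussian concentration-of-measure argument, exploiting that the mean $\mathbb{E}[F_{k+1}] = f_{k+1} \to f$ is already controlled by Theorem \ref{thm1}. Write $F_{k+1} \equiv \frac{1}{2^{k+1}}\log\sum_{\vec S}\exp(-\beta H_{k+1}[\vec S]+\sum_i h_i S_i)$ for the random free-energy density and regard it as a function of the entire collection of independent Gaussian variables it depends on: the couplings $J_{ij}$ entering at every level $l=1,\ldots,k+1$ of the hierarchy (at level $l$ there are $2^{k+1-l}$ blocks, each carrying $\binom{2^l}{2}$ couplings with prefactor $2^{-l\sigma}$), together with the external fields $h_i$. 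Since all of these are Gaussian, the Tsirelson--Ibragimov--Sudakov concentration inequality applies: if $F_{k+1}$ is $L$-Lipschitz in these variables, then $\mathbb{P}(|F_{k+1}-\mathbb{E}[F_{k+1}]|>u)\le 2\exp(-u^2/(2L^2))$.

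First I would compute the gradient of $\log Z_{k+1}$. Differentiating with respect to a level-$l$ coupling yields $\frac{\beta}{2^{l\sigma}}\langle S_i S_j\rangle$, and with respect to a field $h_i$ yields $\langle S_i\rangle$, where $\langle\cdot\rangle$ is the Boltzmann average; both correlators are bounded by $1$ in absolute value, uniformly in the disorder. Counting the pairs at level $l$ as $2^{k+1-l}\binom{2^l}{2}\le 2^{k+l}$, the level-$l$ contribution to $\|\nabla\log Z_{k+1}\|^2$ is at most $2^{k+l}\,\beta^2 2^{-2l\sigma}=\beta^2 2^{k}\,2^{l(1-2\sigma)}$, while the field contribution is $O(2^{k+1})$.

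The decisive step--and the point where the hypothesis $\sigma>1/2$ enters in exactly the same way as in Theorem \ref{thm1}--is that summing over levels produces the convergent geometric series $\sum_{l=1}^{k+1}2^{l(1-2\sigma)}\le (1-2^{1-2\sigma})^{-1}$, so that $\|\nabla\log Z_{k+1}\|^2\le C\,2^{k+1}$ with $C$ independent of $k$. Dividing by $2^{k+1}$, the density $F_{k+1}$ is Lipschitz with $L^2\le C/2^{k+1}$, shrinking like the inverse system size. Substituting into the concentration inequality gives $\mathbb{P}(|F_{k+1}-f_{k+1}|>u)\le 2\exp(-2^{k}u^2/C)$.

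For each fixed $u>0$ these probabilities are summable in $k$, so Borel--Cantelli gives $F_{k+1}-f_{k+1}\to 0$ almost surely, and intersecting the event over $u=1/m$, $m\in\mathbb{N}$, upgrades this to full almost-sure convergence of the difference to zero. Since $f_{k+1}\to f$ by Theorem \ref{thm1}, the triangle inequality yields $F_{k+1}\to f$ with probability one, which is the claim. I expect the only genuine obstacle to be the bookkeeping in the Lipschitz estimate: correctly enumerating the couplings at each hierarchical level and confirming that $\sigma>1/2$ makes the level sum converge so that $L^2$ decays like $2^{-(k+1)}$. Everything downstream is the standard Gaussian concentration machinery already used for the Sherrington--Kirkpatrick model.
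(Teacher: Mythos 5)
Your proposal is correct and follows essentially the same route as the paper, which proves Theorem \ref{thm2} by invoking the standard SK self-averaging argument of Guerra--Toninelli and Bolthausen: Gaussian concentration of measure for the Lipschitz function $\frac{1}{2^{k+1}}\log Z_{k+1}$ plus Borel--Cantelli, combined with the convergence $f_{k+1}\to f$ from Theorem \ref{thm1}. Your hierarchical bookkeeping (level-$l$ couplings numbering $2^{k+1-l}\binom{2^l}{2}$ with prefactor $2^{-l\sigma}$, yielding the geometric series controlled by $\sigma>1/2$) is exactly the adaptation the paper leaves implicit; the only cosmetic point is that if $h$ has non-unit variance the field derivatives pick up a factor of its standard deviation, which does not affect the $O(2^{k+1})$ bound.
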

Theorem \ref{thm2} can be proven by a step-by-step repetition of the proof of free-energy self-averaging for the SK model  \cite{guerra2002thermodynamic,bolthausen2007spin}. \\

We will  now establish a bound for the free energy of the HEA. We start by proving a MF bound for the free energy based on an extension of the RSB free-energy bounds for the SK model \cite{guerra2003broken} by the following

\begin{thm}[Mean-field bound]\label{thm3}
  Consider  $0 \equiv q_{0} \leq q_1  \leq \cdots \leq  q_K \equiv 1$, $0\equiv m_0  < m_1 \leq m_2 \leq \cdots \leq m_K \leq m_{K+1} \equiv 1$ and $K$ IID random variables $z_1, \cdots, z_K$ with zero mean and unit variance. Consider the sequence $Z_0, Z_1, \cdots, Z_K$ defined recursively by
  \bea\label{eq26}
  Z_K &\equiv& \cosh\left( h + \beta \sqrt{\sum_{l=1}^{k+1} 2^{l(1-2 \sigma)}} \sum_{a=1}^{K} \sqrt{q_a - q_{a-1}} z_a \right),\\ \nn
  Z_a^{m_{a+1}} & = & \mathbb{E}_{a+1}\left[ Z_{a+1}^{m_{a+1}} \right],
  \eea
  where $\mathbb{E}_{a}$ denotes the expectation with respect to $z_a$. Then,
  \bea\label{eq38}
  f_{k+1}& \leq & \log 2 + \mathbb{E}[\log Z_0] +  \frac{\beta^2}{4} \sum_{l=1}^{k+1} 2^{l(1-2\sigma)} \left[ \sum_{a=1}^{K} (m_{a+1}-m_a) q_a^2  - 1\right] .
  \eea
\end{thm}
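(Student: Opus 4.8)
The plan is to adapt Guerra's replica-symmetry-breaking interpolation for the SK model to the hierarchical couplings of the HEA, the decisive new ingredient being a convexity inequality that reduces the hierarchical overlap structure to a mean-field one with the effective variance $\sigma_{\mathrm{eff}}^2 \equiv \sum_{l=1}^{k+1} 2^{l(1-2\sigma)}$ that appears in Eq.~(\ref{eq26}). First I would introduce an interpolating Hamiltonian carrying a parameter $0\le t\le 1$, with $\sqrt t$ multiplying the \emph{full} HEA coupling $-\beta H_{k+1}[\vec S]$ (all hierarchical levels simultaneously, in contrast to the single-level interpolation of Eq.~(\ref{eq27})) and $\sqrt{1-t}$ multiplying tree-structured Gaussian cavity fields $\sum_{a=1}^{K}\sqrt{q_a-q_{a-1}}\,z_a$ whose covariance is tuned to $\sigma_{\mathrm{eff}}^2$ and to the overlaps $q_a$. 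Following Guerra, I would define the $m_a$-weighted interpolating functional $\phi_t$ through the recursion in Eq.~(\ref{eq26}), so that $\phi_1=f_{k+1}$ (the cavity term switches off and the $m_a$-averages act trivially) and $\phi_0=\log 2+\mathbb E[\log Z_0]$ (the coupling switches off and the single-site partition function factorizes over the $2^{k+1}$ spins).

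Next I would differentiate $\phi_t$ and integrate by parts over the couplings $J_{ij}$ at every level (following the computation in the proof of Theorem~\ref{thm1}) and over the cavity Gaussians $z_a$ (as in Guerra's SK construction). The cavity contribution is identical to the SK case: combined with the weights $m_a$ it produces the reference overlaps $q_a$ and the $m_a$-weighted replica averages. The coupling contribution is, however, genuinely hierarchical: integrating by parts the level-$l$ couplings yields, for a pair of replicas, the block overlaps $R_B^{(l)}$ within each level-$l$ block rather than the global overlap $R_{12}$.

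The crux is to control these block overlaps. Writing the covariance generated by the couplings as a sum over levels and blocks of $\bigl(R_B^{(l)}\bigr)^2$, I would apply the power-mean (convexity) inequality $\sum_B \bigl(R_B^{(l)}\bigr)^2 \ge 2^{\,k+1-l}\,R_{12}^2$ to every replica pair. On the diagonal this is an equality (all block overlaps equal unity), so the self-covariance matches the mean-field one exactly; summing the geometric factors $2^{l(1-2\sigma)}$ then produces the effective variance $\sigma_{\mathrm{eff}}^2$ and the constant term in the bracket of Eq.~(\ref{eq38}). Off the diagonal the inequality is strict and — crucially — the cross-covariance enters the derivative with a minus sign, so bounding the block overlaps from below bounds the HEA derivative \emph{above} by the SK derivative with $\beta\mapsto\beta\,\sigma_{\mathrm{eff}}$. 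I would then invoke Guerra's completion of squares: the reduced expression equals $\frac{\beta^2}{4}\sigma_{\mathrm{eff}}^2\bigl[\sum_{a}(m_{a+1}-m_a)q_a^2-1\bigr]$ minus the manifestly non-negative term $\frac{\beta^2}{4}\sigma_{\mathrm{eff}}^2\sum_a(m_{a+1}-m_a)\,\mathbb E\langle(R_{12}-q_a)^2\rangle_a$; integrating over $t\in[0,1]$ and discarding the non-negative term yields Eq.~(\ref{eq38}).

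I expect the main obstacle to be exactly this reduction step. One must verify that, after the integration by parts, the hierarchical covariance splits cleanly into a global-overlap piece carrying the geometric sum $\sum_l 2^{l(1-2\sigma)}=\sigma_{\mathrm{eff}}^2$ plus a replica-independent constant that cancels between the self and cross terms, and that the convexity inequality acts with the sign that dominates the HEA by the mean-field model while leaving the diagonal — hence the exact constant and the effective variance — untouched. Setting up the tree-structured cavity fields and the $m_a$-weighted averages so that the endpoints $\phi_0,\phi_1$ come out correctly is routine given Guerra's construction; keeping the hierarchical bookkeeping aligned with that construction across all levels is where the care is required.
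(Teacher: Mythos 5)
Your proposal is correct, but it takes a genuinely different route from the paper's proof. The paper never interpolates all hierarchical levels at once: it applies Guerra's one-level RSB interpolation only to the \emph{top-level} coupling (Eq.~(\ref{eq25})), keeping $H_k^1+H_k^2$ intact, so that the $t=0$ endpoint is not a one-spin system but two half-size HEAs whose external-field distribution absorbs the freshly added cavity fields (Eq.~(\ref{24})); the derivative (\ref{eq23}) is then verbatim the SK computation, involving only the global overlap $R_{12}$ of the current block, and the bound (\ref{eq38}) follows by iterating the recursive inequality (\ref{eq24}) down to a single spin, which is where both the remainder $\frac{\beta^2}{4}\sum_l 2^{l(1-2\sigma)}[\cdots]$ and the cavity variance $\sum_{l}2^{l(1-2\sigma)}$ accumulate. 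Your one-shot interpolation instead faces block overlaps $R_B^{(l)}$ at every level and disposes of them with the pointwise Jensen inequality $2^{-(k+1-l)}\sum_B \bigl(R_B^{(l)}\bigr)^2 \geq R_{12}^2$ (valid because $R_{12}$ is the block average of the $R_B^{(l)}$), which acts in the favorable direction because the cross-covariance enters the derivative with a minus sign, weighted by the non-negative $m_{a+1}-m_a$ under the positivity-preserving averages $\langle\cdot\rangle_a$; together with the cancellation of the replica-independent diagonal constants (which you correctly flag, using $\sum_a (m_{a+1}-m_a)=1$), this reduces everything to the SK derivative with variance $\sigma_{\mathrm{eff}}^2=\sum_l 2^{l(1-2\sigma)}$, and Guerra's completion of squares finishes the argument. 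Both proofs are sound and give the same bound; the trade-off is instructive. Your version is more compact and isolates the structural reason the bound holds: the hierarchical covariance dominates the effective mean-field covariance $\sigma_{\mathrm{eff}}^2 R_{12}^2$ pointwise on replica pairs. The paper's recursion needs no comparison inequality at all, mirrors the scheme used for Theorem~\ref{thm1} and for DHM, and---because each level carries its own interpolation---immediately yields the generalization to level-dependent parameters $\{q_a^l,m_a^l\}$ discussed after the theorem (Eq.~(\ref{eq42})), which your construction, tying all levels to a single set $(q_a,m_a)$, does not give directly.
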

\begin{proof}
  The proof makes use of the RSB bounds for the SK model \cite{guerra2003broken} in a recursive way, suitably adapted to the hierarchical structure of the model. \\

  Let us introduce the interpolating Hamiltonian
  \bea\label{eq25}
  H_{k+1,t}[\vec{S}] &\equiv & - \frac{ \sqrt{t}}{2^{(k+1)\sigma}}\sum_{i>j=1}^{2^{k+1}} J_{ij} S_i S_j +  \sqrt{1-t} 2^{(k+1)(1/2-\sigma)} \times \\ \nn
  && \times \sum_{a=1}^K \sqrt{q_a - q_{a-1}}  \sum_{i=1}^{2^{k+1}} J^{k+1}_{a,i}S_i  +  H_{k}^{1}[\vec{S}_{1}]+H_{k}^{2}[\vec{S}_{2}] ,
  \eea
  where $\{J^{k+1}_{a,i}\}$ are IID Gaussian random variables with zero mean and unit variance.  We introduce the partition functions $Z^{k+1}_{0,t}(h, \{ h' \}), \cdots, Z^{k+1}_{K,t}(h, \{ h' \})$ defined recursively by
  \bea \label{eq22}
  Z^{k+1}_{K,t}(h, \{ h' \}) & \equiv & \sum_{\vec{S}} \exp\left[ - \beta H_{k+1,t}[\vec{S}] + \sum_{i=1}^{2^{k+1}} \left(h_i + \sum_{a=1}^K h'_{a,i}\right)S_i\right],\\ \nn
  Z^{k+1}_{a,t}(h, \{ h' \})^{m_{a+1}} &=& \mathbb{E}_{a+1}\left[ Z^{k+1}_{a+1,t}(h, \{ h' \})^{m_{a+1}}\right],
  \eea
  where $\{ h'_{a,i}\}$  are IID Gaussian random variables and $ \mathbb{E}_{a}[ \cdot ]$ denotes the average with respect to all variables labeled by index $a$. The free energy associated with the Hamiltonian (\ref{eq25}) is
  \be\label{eq9}
  \phi_{k+1,t}(h, \{ h' \}) \equiv \frac{1}{2^{k+1}} \mathbb{E} [\log Z^{k+1,t}_0(h, \{ h' \}) ],
  \ee
  where in the left-hand side (LHS) of Eq. (\ref{eq9}) the dependence of  $\phi_{k+1,t}$ on $h$ and $ \{ h' \}$ stands for the dependence on the distribution of the random variables $h$, $\{h'\}$.\\

  Let us now proceed with the free-energy interpolation. First, from Eqs. (\ref{eq25}), (\ref{eq22}), (\ref{eq9})  it  is easy to show that
  \be \label{24}
  \phi_{k+1,0}(h, \{ h' \}) = \phi_{k,1} (h , \{ h'_a + \beta 2^{(k+1)(1/2-\sigma)} \sqrt{q_a - q_{a-1}} J^{k+1}_a \} ).
  \ee
  The derivative of  $\phi_{k+1,t}$ with respect to $t$ can be computed with a step-by-step repetition of the RSB-bound proof for the SK model \cite{guerra2003broken}. Given the average $\omega$ associated with the Boltzmannfaktor  (\ref{eq22}) and the respective replicated average $\Omega$, we define the averages $\tilde{\omega}_0, \cdots, \tilde{\omega}_{K}$ and the respective replicated averages $\tilde{\Omega}_0, \cdots, \tilde{\Omega}_{K}$ \cite{guerra2003broken}  as
  \beas
  \tilde{\omega}_K(\cdot) \equiv \omega(\cdot), \; \tilde{\omega}_a(\cdot) \equiv \mathbb{E}_{a+1} \cdots \mathbb{E}_{K}[f_{a+1} \cdots f_{K} \omega (\cdot)].
  \eeas
  Setting
  \be
  f_a \equiv \frac{Z^{k+1}_{a,t}(h, \{ h' \})^{m_a}}{\mathbb{E}_a[Z^{k+1}_{a,t}(h, \{ h' \})^{m_a}]}
  \ee
  for $a=1,\cdots, K$, and $\langle \cdot \rangle_a \equiv  \mathbb{E}[f_1 \cdots f_a  \tilde{\Omega}_a(\cdot)]$ for $a=0, \cdots, K$, we obtain
  \bea \label{eq23}
  \frac{d\phi_{k+1,t}(h, \{ h' \})}{dt}&  =&    \frac{\beta^2}{4} 2^{(k+1)(1-2 \sigma)} \left[ \sum_{a=0}^{K}  (m_{a+1}-m_a) q_a^2 -1 \right] + \\ \nn
  &&   - \frac{\beta^2}{4} 2^{(k+1)(1-2 \sigma)} \sum_{a=0}^{K} (m_{a+1}-m_a) \langle (R_{12} - q_a)^2 \rangle_a.
  \eea
  Using Eqs. (\ref{24}), (\ref{eq23}) we obtain the recursive inequality
  \bea \label{eq24}
  \phi_{k+1,1}(h, \{ h' \}) &= &\phi_{k+1,0}(h, \{ h' \}) + \int_0^1 \frac{d\phi_{k+1,t}(h, \{ h' \})}{dt}  dt \\  \nn
  &\leq & \phi_{k,1} (h , \{ h'_a + \beta 2^{(k+1)(1/2-\sigma)} \sqrt{q_a - q_{a-1}} J^{k+1}_a \} ) +\\ \nn
  & & +   \frac{\beta^2}{4} 2^{(k+1)(1-2 \sigma)} \left[ \sum_{a=0}^{K}  (m_{a+1}-m_a) q_a^2 -  1 \right].
  \eea
  From Eqs. (\ref{eq25}), (\ref{eq22}), (\ref{eq9}), it is easy to show that
  \be\label{eq28}
  \phi_{k+1,1}(h, \vec{0}) = f_{k+1}.
  \ee
  By using Eq. (\ref{eq28}) and iterating Eq. (\ref{eq24}) for $k+1, k, \cdots, 1$, we obtain
  \bea \label{eq30}
  \phi_{k+1,1}(h, \{ h' \}) &\leq & \phi_{k,1}  (h , \{ h'_a + \beta 2^{(k+1)(1/2-\sigma)} \sqrt{q_a - q_{a-1}} J^{k+1}_a \} ) +\\ \nn
  & & +   \frac{\beta^2}{4} 2^{(k+1)(1-2 \sigma)}\left[ \sum_{a=0}^{K}  (m_{a+1}-m_a) q_a^2 - 1 \right] \\  \nn
  & \leq & \cdots \\ \nn
  & \leq & \phi_{1,0} \left(h ,\left\{ \beta \sum_{l=2}^{k+1}2^{l(1/2-\sigma)} \sqrt{q_a - q_{a-1}} J^{l}_a \right\} \right) + \\ \nn
  && +    \frac{\beta^2}{4} \sum_{l=1}^{k+1} 2^{l(1-2 \sigma)} \left[\sum_{a=0}^{K}  (m_{a+1}-m_a) q_a^2 -  1 \right].
  \eea
  From the definition of the interpolating Hamiltonian, Eq. (\ref{eq25}), it is easy to show that the first term in the last line of Eq. (\ref{eq30}) is given by the free energy of a single-spin system, and that this is equal to $\log 2 + \mathbb{E}[\log Z_0]$, where $Z_0$ is defined by Eq. (\ref{eq26}).
\end{proof}

The bound of Theorem \ref{thm3}, depends on the parameters 
\be\label{eq41}
q_1, \cdots, q_{K-1}, \, m_1, \cdots, m_K.
\ee
 By minimizing the right-hand side of Eq. (\ref{eq38}) with respect to these parameters, one obtains the best estimate of the free energy according to this RSB  bound. It is important to point out that the bound (\ref{eq38}) can be generalized by letting the parameters (\ref{eq41}) depend on the hierarchical level: 
\be\label{eq42}
\{ q_1^l, \cdots, q_{K-1}^l, \, m_1^l, \cdots, m_K^l\}_l,
\ee
where $l=1, \cdots, k+1$. It is easy to check that the parameter values realizing the minimum of such bound are level-independent 
\bea
q_a^1 = q_a^2 = \cdots = q_a^{k+1}, \, a=1, \cdots, K-1,\\
m_a^1 = m_a^2 = \cdots = m_a^{k+1}, \, a=1, \cdots, K.
\eea
Hence, in Theorem \ref{thm3} we considered directly the case where the bound parameters are independent of the hierarchical level. \\

Theorem \ref{thm3} establishes a RSB bound for the free energy of the HEA. It is easy to show that this bound is based on a MF picture: Since the bound is obtained as a recursive iteration of Eq. (\ref{eq24}), the bound reminder is given by a sum over all levels $l=1,\cdots,k+1$ of the last term in Eq. (\ref{eq23}), which represents the fluctuations of the order parameter $R_{12}$ within a block of $2^l$ spins with respect to the values $q_0, \cdots, q_{K}$. Since in Definition  \ref{eq10} of the HEA the interaction at the $l$-th level is a MF one, for large $l$  we expect these blocks to have a MF-like behavior, i.e. we expect  OPF to be suppressed. Differently, for small $l$ the fluctuations of $R_{12}$ are not small, and neither is the reminder in Eq. (\ref{eq23}). It follows that in order to improve upon the MF bound of Theorem \ref{thm3}, we should account for the OPF arising in small blocks of spins. In what follows, we propose a new scheme to account for these fluctuations that fully exploits the hierarchical structure of the model. In particular, in the next Section we illustrate this idea for DHM, and show that this new scheme accounts for OPF, yielding a free-energy bound that improves upon the MF one.

\section{Dyson's Hierarchical Model}

Dyson's hierarchical model  is a system of $2^{k+1}$ Ising spins $S_i = \pm 1$ labeled by index $i=1,2,\cdots, 2^{k+1}$, whose Hamiltonian $H_{k+1}[\vec{S}]$ is introduced recursively by the following

\begin{defn}\label{def2}
  The Hamiltonian of DHM is defined by
  \be
  H_{k+1}[\vec{S}]=H_{k}[\vec{S}_{1}]+H_{k}[\vec{S}_{2}]-\frac{J}{2^{(k+1)2\sigma}}\sum_{i<j=1}^{2^{k+1}}S_{i}S_{j}
  \ee
where $\vec{S}_1 \equiv \{S_i\}_{1 \leq i \leq 2^k}$,  $\vec{S}_2 \equiv \{S_i\}_{2^k+1 \leq i \leq 2^{k+1}}$, 
 $H_{0}[S]=0$, $J\geq0$ and $\sigma$ is a number.
\end{defn}

Like for the HEA, the number $\sigma$ in Definition \ref{def2} determines how fast the spin-spin interactions decrease with distance.\\

The existence of the thermodynamic limit for the free energy of DHM
\be
f_{k+1}\equiv\frac{1}{2^{k+1}}\log\sum_{\vec{S}}\exp\left(-\beta H_{k+1}[\vec{S}]+h\sum_{i=1}^{2^{k+1}}S_{i}\right),
\ee
has been proven  by Gallavotti and Miracle-Sole \cite{gallavotti1967statistical}. Here, we first prove the analogous of the MF bound, Theorem \ref{thm3}, previously derived for the HEA model.

\begin{thm}[Mean-field bound] \label{thm:1}
  Given  $-1\leq m \leq1$, one has
  \bea \label{eq:9}
  f_{k+1} & \geq & \log2+\log\cosh\left[\beta J\sum_{l=1}^{k+1}2^{l(1-2\sigma)}m+h\right] + \\ \nn
  && -\frac{\beta J}{2}\left[\sum_{l=1}^{k+1} 2^{l(1-2\sigma)}m^{2}+\sum_{l=1}^{k+1}\frac{1}{2^{2l\sigma}}\right] \\ \nn
  & \equiv & \phi_{k+1}^{ \textrm{MF}}(m),
  \eea
  where $\textrm{MF}$ stands for mean field.
\end{thm}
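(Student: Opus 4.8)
The plan is to obtain the lower bound by \emph{linearizing} the quadratic ferromagnetic interaction at each hierarchical level through a completion-of-squares inequality, and then iterating the resulting recursion down to a single spin. Writing $M \equiv \sum_{i=1}^{2^{k+1}} S_i$ for the total magnetization and using $\sum_{i<j} S_i S_j = \tfrac12\big(M^2 - 2^{k+1}\big)$, the top-level interaction contributes $\tfrac{\beta J}{2\cdot 2^{(k+1)2\sigma}} M^2$ to the exponent $-\beta H_{k+1}$, up to an explicit constant $-\tfrac{\beta J}{2} 2^{(k+1)(1-2\sigma)}$. Because $J\geq 0$ this coefficient is positive, so for any trial value $m$ the elementary inequality $(M - 2^{k+1} m)^2 \geq 0$ gives the \emph{pointwise} bound
\[
\tfrac{\beta J}{2\cdot 2^{(k+1)2\sigma}} M^2 \;\geq\; \beta J\, 2^{(k+1)(1-2\sigma)} m\, M \;-\; \tfrac{\beta J}{2}\, 2^{(k+1)(2-2\sigma)} m^2 .
\]
This is the crucial place where the ferromagnetic sign is used: the quadratic term sits in the exponent with the right sign for a pointwise lower bound, which is precisely what one lacks for the spin-glass couplings of the HEA.

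Second, I would insert this inequality into the partition function. Since $\exp$ is monotone and the bound holds pointwise in $\vec S$, the linear-in-$M$ replacement only enlarges $Z_{k+1}$, and the resulting exponent $-\beta H_k^1 - \beta H_k^2 + h' M$ factorizes over the two sub-blocks, because $M = M_1 + M_2$ and $H_k^1, H_k^2$ act on disjoint spins, with effective uniform field $h' \equiv h + \beta J\, 2^{(k+1)(1-2\sigma)} m$. Taking logarithms and dividing by $2^{k+1}$ yields the recursion
\[
f_{k+1}(h) \;\geq\; f_k(h') \;-\; \tfrac{\beta J}{2}\, 2^{(k+1)(1-2\sigma)} m^2 \;-\; \tfrac{\beta J}{2}\, \tfrac{1}{2^{2(k+1)\sigma}},
\]
where $f_k(h')$ is the level-$k$ DHM free energy in the uniform field $h'$. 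Here it is essential that in DHM both descendant blocks carry the \emph{same} deterministic Hamiltonian, so no disorder average appears and the field remains uniform; this is what makes the recursion close with a single scalar field.

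Third, I would iterate this recursion for $l = k+1, k, \ldots, 1$, applying the \emph{same} $m$ at every level. Each step adds a field increment $\beta J\, 2^{l(1-2\sigma)} m$, so upon reaching the single-spin level the accumulated field is $h + \beta J m \sum_{l=1}^{k+1} 2^{l(1-2\sigma)}$, while the subtracted constants telescope into $\tfrac{\beta J}{2}\big[\sum_{l=1}^{k+1} 2^{l(1-2\sigma)} m^2 + \sum_{l=1}^{k+1} 2^{-2l\sigma}\big]$. Evaluating the base case, a single spin in field $h_{\mathrm{final}}$ contributes $\log\big(2\cosh h_{\mathrm{final}}\big) = \log 2 + \log\cosh h_{\mathrm{final}}$, which reproduces exactly the right-hand side $\phi_{k+1}^{\mathrm{MF}}(m)$ of (\ref{eq:9}).

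I do not expect a genuine analytic obstacle: the whole argument rests on one convexity inequality followed by a careful bookkeeping of how the effective field and the additive constants propagate through the recursion. The main care is to check that the field increments telescope into the clean sum $\sum_{l=1}^{k+1} 2^{l(1-2\sigma)}$ and that the two constant contributions separate into the $m^2$-sum and the $2^{-2l\sigma}$-sum as stated; a secondary point is to note that applying a level-\emph{independent} $m$ is legitimate, since the bound holds for every fixed $m \in [-1,1]$, and (as remarked for the HEA in Theorem \ref{thm3}) the level-independent choice is optimal.
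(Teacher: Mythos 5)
Your proof is correct, and it takes a genuinely different route from the paper's. The paper establishes the single-level step by a Guerra-style interpolation: the Hamiltonian of Eq.~(\ref{eq31}) trades the top-level pair interaction for the uniform field $mJ2^{(k+1)(1-2\sigma)}\sum_i S_i$, the $t$-derivative of the interpolating free energy is computed exactly, Eq.~(\ref{eq35}), and the bound follows by discarding the manifestly non-negative remainder $\frac{\beta J}{2}\,2^{(k+1)(1-2\sigma)}\langle (M-m)^2\rangle_t$ (there $M$ denotes the magnetization per spin); the iteration down the hierarchy then proceeds exactly as in your third step. Your pointwise inequality $(M-2^{k+1}m)^2\geq 0$ in the exponent, plus monotonicity of $\exp$ and factorization over the two sub-blocks, reaches the same recursion with no calculus at all, and your bookkeeping is right: the exponent constants $-\frac{\beta J}{2}2^{(k+1)(2-2\sigma)}m^2$ and $-\frac{\beta J}{2}2^{(k+1)(1-2\sigma)}$, once divided by $2^{k+1}$, are precisely the per-level terms $-\frac{\beta J}{2}2^{(k+1)(1-2\sigma)}m^2$ and $-\frac{\beta J}{2}2^{-2(k+1)\sigma}$ that sum to the remainder in Eq.~(\ref{eq:9}), and the base case $\log 2\cosh(\cdot)$ matches the paper's evaluation of $\phi_{1,0}$. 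What the interpolation buys is that the discarded term is exhibited explicitly as the order-parameter fluctuation $\langle (M-m)^2\rangle_t$: that identification drives the paper's narrative, since the non-mean-field bound of Theorem~\ref{thm:2} improves on Theorem~\ref{thm:1} precisely by reabsorbing part of this fluctuation into the sub-block Hamiltonians, and the same interpolation structure is what carries over to the HEA (Theorem~\ref{thm3}), where---as you correctly observe---no pointwise positivity is available and one must instead bound thermal and disorder averages of squares. What your route buys is elementarity and a transparent display of where the ferromagnetic sign $J\geq 0$ enters. One wording slip to fix: you write that the linear-in-$M$ replacement ``only enlarges $Z_{k+1}$''; in fact the replacement can only \emph{decrease} the partition function, i.e.\ $Z_{k+1}$ dominates its linearized counterpart, which is the direction your displayed recursion actually uses.
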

\begin{proof}
  Let us define the interpolating Hamiltonian $H_{k+1,t}[\vec{S}]$, the associated partition function $Z_{k+1,t}(h)$  and the free energy $\phi_{k+1,t}(h)$ as
  \bea\label{eq31}
  H_{k+1,t}[\vec{S}]  &\equiv&  - \frac{Jt}{2^{2(k+1)\sigma}}\sum_{i>j=1}^{2^{k+1}}S_{i}S_{j} -(1-t)m J 2^{(k+1)(1-2\sigma)}\sum_{i=1}^{2^{k+1}}S_{i} +\\  \nn
  &&+ H_{k}[\vec{S}_{1}]+H_{k}[\vec{S}_{2}],\\
  \label{eq32}
  Z_{k+1,t}(h) & \equiv & \sum_{\vec{S}}\exp\left(-\beta H_{k+1,t}[\vec{S}]+h\sum_{i=1}^{2^{k+1}}S_{i}\right),\\
  \label{eq33}
  \phi_{k+1,t}(h) & \equiv & \frac{1}{2^{k+1}}\log Z_{k+1,t}(h).
  \eea
  Using Eqs.  (\ref{eq31}), (\ref{eq32}),  (\ref{eq33}), it is easy to show that
  \begin{eqnarray}\label{eq36}
    \phi_{k+1,1}(h) &=& f_{k+1},\\
    \label{eq34}
    \phi_{k+1,0}(h) &=& \phi_{k,1} (h+\beta J m 2^{(k+1)(1-2\sigma)} ),\\
    \label{eq35}
    \frac{d\phi_{k+1,t}(x,h)}{dt} & = & -\frac{\beta J}{2}\left( 2^{(k+1)(1-2\sigma)} m^{2}+2^{-2(k+1)\sigma}\right) +\\ \nn
    &  & +\frac{\beta J}{2} 2^{(k+1)(1-2\sigma)} \langle (M-m)^2\rangle_{t},
  \end{eqnarray}
  where $M\equiv \frac{1}{2^{k+1}} \sum_{i=1}^{2^{k+1}} S_i$ is the magnetization within a block of $2^{k+1}$ spins and  $\langle \cdot \rangle_t$ stands for the average associated with the Boltzmannfaktor (\ref{eq31}).\\

  From Eqs. (\ref{eq36}), (\ref{eq34}), (\ref{eq35}) we have
  \bea\label{eq37}
  \phi_{k+1,1}(h) & = & \phi_{k+1,0}(h) + \int_0^1 \frac{d\phi_{l,t}(x,h)}{dt} dt \\ \nn
  & \geq & \phi_{k,1} (h+\beta J m 2^{(k+1)(1-2\sigma)} )   -\frac{\beta J}{2}\left( 2^{(k+1)(1-2\sigma)} m^{2}+2^{-2(k+1)\sigma}\right) \\ \nn
  & \geq & \cdots \\ \nn
  &\geq&  \phi_{1,0}\left(h+\beta J m \sum_{l=2}^{k+1}2^{l(1-2\sigma)}\right)  - \frac{\beta J}{2}\left( \sum_{l=1}^{k+1} 2^{l(1-2\sigma)} m^{2}+\sum_{l=1}^{k+1}2^{-2l\sigma}\right),
  \eea
  where in Eq. (\ref{eq37}) we have recursively  used Eq. (\ref{eq34}) for $k+1, k, \cdots, 1$.  Using Eqs. (\ref{eq31}), (\ref{eq32}), (\ref{eq33}), it is easy to show that the last line in Eq. (\ref{eq37}) implies Eq. (\ref{eq:9}).
\end{proof}

A direct inspection of the reminders in bounds (\ref{eq38}), (\ref{eq:9})--Eqs. (\ref{eq23}) and (\ref{eq35}) respectively--shows that the  bounds in  Theorems \ref{thm3} and  \ref{thm:2} are exact only if OPF vanish, as one would expect in a MF scenario. Here, we propose a novel method providing a NMF bound that accounts for non-vanishing OPF. The method is described in the following

\begin{thm}[Non-mean-field bound]\label{thm:2}
  Given $-1 \leq m \leq 1$, one has
  \begin{eqnarray}\label{eq:4}\nn
    f_{k+1} & \geq & \log 2 + \log  \cosh\left[ \beta J \left( \sum_{l=1}^{k+1}2^{l(1-2\sigma)} -  \sum_{l=1}^{k+1}2^{-2 l \sigma}  \right) m+h\right] + \\
    && - \frac{\beta J }{2} \left( \sum_{l=1}^{k+1}2^{l(1-2\sigma)} -  \sum_{l=1}^{k+1}2^{-2 l \sigma}   \right) m^2\\ \nn
    & \equiv & \phi_{k+1}^{\textrm{NMF}}(m),
  \end{eqnarray}
  where $\textrm{NMF}$ stands for non mean field.
\end{thm}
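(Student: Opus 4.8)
The plan is to follow the interpolation scheme of Theorem~\ref{thm:1} almost verbatim, altering only the strength of the mean-field term so that it encodes the interaction a single spin feels from the \emph{other} $2^{k+1}-1$ spins of its block, rather than from all $2^{k+1}$ of them (the latter being what the naive mean-field substitution implicitly assumes). Concretely, I would use the interpolating Hamiltonian
\[
H_{k+1,t}[\vec S] \equiv -\frac{Jt}{2^{2(k+1)\sigma}}\sum_{i>j=1}^{2^{k+1}} S_iS_j -(1-t)\,mJ\bigl(2^{(k+1)(1-2\sigma)}-2^{-2(k+1)\sigma}\bigr)\sum_{i=1}^{2^{k+1}}S_i + H_k[\vec S_1]+H_k[\vec S_2],
\]
which differs from Eq.~(\ref{eq31}) only through the replacement $2^{(k+1)(1-2\sigma)}\mapsto 2^{(k+1)(1-2\sigma)}-2^{-2(k+1)\sigma}$ in the mean-field coupling. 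As before, $t=1$ gives $f_{k+1}$, while $t=0$ reduces to a level-$k$ model whose uniform field is shifted by $\beta Jm\bigl(2^{(k+1)(1-2\sigma)}-2^{-2(k+1)\sigma}\bigr)$.

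The second step is to differentiate $\phi_{k+1,t}$ in $t$. Repeating the differentiation of Theorem~\ref{thm:1}, the derivative is a linear combination of $\langle M^2\rangle_t$ and $\langle M\rangle_t$ plus a constant. The key algebraic observation is that $\langle M^2\rangle_t=2^{-(k+1)}+2^{-2(k+1)}\sum_{i\neq j}\langle S_iS_j\rangle_t$, so the deterministic diagonal piece $2^{-(k+1)}$ (coming from $S_i^2=1$) multiplies out to exactly cancel the constant $-\tfrac{\beta J}{2}2^{-2(k+1)\sigma}$ produced by the diagonal of the quadratic interaction. This cancellation is the concrete meaning of ``absorbing the self-part of the order-parameter fluctuations'': rather than discarding it together with the rest of $\langle(M-m)^2\rangle_t$ as in the mean-field bound, one keeps it and lets it annihilate the self-energy constant.

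The crucial, model-specific step is to bound the surviving off-diagonal term $\sum_{i\neq j}\langle S_iS_j\rangle_t$ from below. Here I would invoke Griffiths' second inequality \cite{griffiths1967correlationsII}: since $J\geq0$ and the lower-level couplings are ferromagnetic, and (taking $m\geq0$, $h\geq0$, which costs nothing by the spin-flip symmetry $S_i\to-S_i$) the effective single-site field is nonnegative, the interpolating measure is a ferromagnet in a nonnegative field, whence $\langle S_iS_j\rangle_t\geq\langle S_i\rangle_t\langle S_j\rangle_t$. Combining this with the site-transitivity of the hierarchical tree under a uniform field, which forces $\langle S_i\rangle_t=\langle M\rangle_t$ for every $i$, gives $\langle M^2\rangle_t\geq 2^{-(k+1)}+\bigl(1-2^{-(k+1)}\bigr)\langle M\rangle_t^2$. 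Substituting this, the coefficients reorganize precisely into $A\equiv 2^{(k+1)(1-2\sigma)}-2^{-2(k+1)\sigma}$; completing the square and discarding $(\langle M\rangle_t-m)^2\geq0$ then leaves $d\phi_{k+1,t}/dt\geq -\tfrac{\beta J}{2}A\,m^2$.

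Finally I would iterate this one-step inequality over $l=k+1,k,\dots,1$ exactly as in Eq.~(\ref{eq37}): the accumulated field shifts sum to $\beta Jm\sum_{l=1}^{k+1}\bigl(2^{l(1-2\sigma)}-2^{-2l\sigma}\bigr)$, the terminal single-spin partition function contributes $\log2+\log\cosh[\,\cdot\,]$, and the collected reminders sum to $-\tfrac{\beta J}{2}m^2\sum_{l=1}^{k+1}\bigl(2^{l(1-2\sigma)}-2^{-2l\sigma}\bigr)$, which is exactly Eq.~(\ref{eq:4}). The main obstacle lies entirely in the third step: one must certify that the interpolating measure is genuinely ferromagnetic with a nonnegative field at every level of the recursion, so that Griffiths' inequality applies, and that site-transitivity survives the (uniform) field shifts. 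Both hold for DHM but fail for the HEA, where no analogue of Griffiths' inequality is available --- which is precisely why this scheme does not carry over to the spin glass.
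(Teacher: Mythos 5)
Your proof is correct, and it reaches Eq.~(\ref{eq:4}) by a genuinely different route than the paper. The paper does not touch the mean-field coefficient: it keeps the full $2^{(k+1)(1-2\sigma)}$ field term of Theorem~\ref{thm:1} and instead enlarges the interpolating Hamiltonian with explicit quadratic fluctuation terms $(S_i-m)(S_j-m)$ carrying an auxiliary parameter $x$ (Eqs.~(\ref{eq13})--(\ref{eq12})). At $t=0$ the within-block part of these terms is handed down to the two sub-blocks through the recursion $x\mapsto(1+x)/2^{2\sigma}$ (Eq.~(\ref{eq19})), so the only quantity discarded at each level is the cross-block term $\sum\langle(S_i-m)(S_j-m)\rangle_t$ in Eq.~(\ref{eq:1}), and the accumulated $x=\sum_l 2^{-2l\sigma}$ re-emerges in the single-spin endpoint to produce the improved coefficients. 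You instead make a minimal modification of Theorem~\ref{thm:1}: replace the coefficient by $A\equiv 2^{(k+1)(1-2\sigma)}-2^{-2(k+1)\sigma}=(2^{k+1}-1)\,2^{-2(k+1)\sigma}$, cancel the self-energy constant against the diagonal of $\langle M^2\rangle_t$, and apply the correlation inequality to \emph{all} off-diagonal pairs (intra- and inter-block alike) before completing the square. Your algebra checks out: $\tfrac{\beta J}{2}2^{(k+1)(1-2\sigma)}\bigl(1-2^{-(k+1)}\bigr)=\tfrac{\beta J}{2}A$, hence $d\phi_{k+1,t}/dt\geq\tfrac{\beta J A}{2}\bigl[(\langle M\rangle_t-m)^2-m^2\bigr]\geq-\tfrac{\beta J A}{2}m^2$, and the iteration reproduces Eq.~(\ref{eq:4}) exactly. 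What each buys: your argument is leaner (no auxiliary parameter, a trivial single-spin endpoint) and exposes the NMF bound as ``mean field with the self-coupling excised''; the paper's construction is engineered so that the discarded remainder is precisely the inter-block fluctuation $\langle(M_1-m)(M_2-m)\rangle_t$, which is the interpretation emphasized after Theorem~\ref{thm:2} and is exactly the mechanism (absorption of OPF into effective sub-block Hamiltonians) the authors hope to export to the HEA. One fine point you should repair: Griffiths' second inequality in the form you quote requires nonnegative fields, and your spin-flip reduction only covers $m$ and $h$ of equal sign; for mixed signs the effective field $h+\beta(1-t)JmA$ plus the accumulated shifts need not be nonnegative. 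The gap is harmless, since for ferromagnetic pair couplings in arbitrary external fields the FKG inequality still yields $\langle S_iS_j\rangle_t\geq\langle S_i\rangle_t\langle S_j\rangle_t$ --- and the paper is no more careful on this point, as its own interpolant contains negative field terms whenever $m>0$ and $x>0$ --- but citing FKG rather than \cite{griffiths1967correlationsII} closes it cleanly.
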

\begin{proof}
  Let us take $x\geq0$, $0 \leq t \leq 1$ and let us introduce the interpolating Hamiltonian
  \be\label{eq13}
  H_{k+1,t}[\vec{S}]  \equiv  -t\, u[\vec{S}]-(1-t)\, v[\vec{S}]+H_{k}[\vec{S}_{1}]+H_{k}[\vec{S}_{2}],
  \ee
  with
  \bea \label{eq11}
  u[\vec{S]} & \equiv & \frac{J}{2^{2(k+1)\sigma}}\sum_{i>j=1}^{2^{k+1}}S_{i}S_{j}+\frac{xJ}{2 \cdot 2^{2(k+1)\sigma}}\sum_{i,j=1}^{2^{k+1}}\left(S_{i}-m\right)\left(S_{j}-m\right),\\\nn
  v[\vec{S}] & \equiv & \frac{J(1+x)}{2\cdot 2^{2(k+1)\sigma}}\left[\sum_{i,j=1}^{2^k}\left(S_{i}-m\right)\left(S_{j}-m\right)+\sum_{i,j=2^k+1}^{2^{k+1}}\left(S_{i}-m\right)\left(S_{j}-m\right)\right]+\\
  &  & + m J 2^{(k+1)(1-2\sigma)}\sum_{i=1}^{2^{k+1}}S_{i}. \label{eq12}
  \eea
  The partition function and free energy associated with the Hamiltonian (\ref{eq13})  are
  \bea \label{eq14}
  Z_{k+1,t}(x,h) & \equiv & \sum_{\vec{S}}\exp\left(-\beta H_{k+1,t}[\vec{S}]+h\sum_{i=1}^{2^{k+1}}S_{i}\right),\\
  \label{eq15}
  \phi_{k+1,t}(x,h) & \equiv & \frac{1}{2^{k+1}}\log Z_{k+1,t}(x,h).
  \eea

  Let us proceed with the interpolation: First, from Eqs. (\ref{eq13}),   (\ref{eq11}),  (\ref{eq12}), (\ref{eq14}),  (\ref{eq15}),  we relate $\phi_{k+1,0}$ to $\phi_{k,1}$
  \bea\label{eq19}
  \phi_{k+1,0}(x,h) = \phi_{k,1}\left(\frac{1+x}{2^{2\sigma}},h+\beta J m 2^{(k+1)(1-2\sigma)}\right).
  \eea
  Using the same definitions as above, it is easy to show that the derivative of $\phi_{k+1,t}$ with respect to $t$ reads
  \begin{eqnarray} \nn
    \frac{d\phi_{k+1,t}(x,h)}{dt} & = & -\frac{\beta J}{2}\left( 2^{(k+1)(1-2\sigma)} m^{2}+2^{-2(k+1)\sigma}\right) +\label{eq:1}\\
    &  & +\frac{\beta J(1+x)}{2^{(k+1)(1+2\sigma)}}\sum_{2^{k}+1\leq i\leq2^k}\,\,\sum_{1\leq j\leq2^k}\langle\left(S_{i}-m \right)\left(S_{j}-m\right)\rangle_{t},
  \end{eqnarray}
  where $\langle \cdot \rangle_t$ denotes the average associated with the Boltzmannfaktor (\ref{eq14}). \\

  It is easy to show that each term in the sum in Eq. (\ref{eq:1}) is non-negative
  \be \label{eq16}
  \langle\left(S_{i}-m \right)\left(S_{j}-m\right)\rangle_{t} \geq 0.
  \ee
  Indeed, because of the translational invariance of the Hamiltonian $H_{k+1,t}$,  the average $\langle S_i \rangle_t$ does not depend on the lattice site $i$. Hence, the LHS of Eq. (\ref{eq16}) reads
  \be\label{eq17}
  \langle\left(S_{i}-m \right)\left(S_{j}-m\right)\rangle_{t} = \langle S_i S_j \rangle_t - 2 m \langle S_i \rangle_t + m^2.
  \ee
  Since  $H_{k+1,t}$ is a ferromagnetic Hamiltonian, Griffith's inequalities for the connected correlation functions \cite{griffiths1967correlationsIII} hold
  \be\label{eq18}
  \langle S_i S_j \rangle_t  - \langle S_i\rangle_t \langle S_j\rangle_t \geq 0.
  \ee
  Putting together Eqs. (\ref{eq17}), (\ref{eq18}), we obtain Eq. (\ref{eq16})
  \be
  \langle\left(S_{i}-m \right)\left(S_{j}-m\right)\rangle_{t} \geq (\langle S_i \rangle_t - m)^2 \geq 0.
  \ee
  Thus, Eqs. (\ref{eq19}), (\ref{eq:1}) and (\ref{eq16}) imply
  \bea\label{eq20}
  \phi_{k+1,1}(x,h) & = & \phi_{k+1,0}(x,h) + \int_0^1 \frac{d\phi_{l,t}(x,h)}{dt} dt \\ \nn
  & \geq & \phi_{k,1}\left(\frac{1+x}{2^{2\sigma}},h+\beta J m 2^{(k+1)(1-2\sigma)}\right)  + \\ \nn
  &&  -\frac{\beta J}{2}\left( 2^{(k+1)(1-2\sigma)} m^{2}+2^{-2(k+1)\sigma}\right) .
  \eea
  Equation (\ref{eq20}) is a recursive inequality relating $\phi_{k+1,1}$ to $\phi_{k,1}$: To obtain a bound for the free energy $f_{k+1}$, we notice that $\phi_{k+1,1}(0,h) = f_{k+1}$ and--proceeding  as in Theorem \ref{thm1}--we exploit the hierarchical structure of the model  by iterating  recursively Eq. (\ref{eq20}) until the level $k=1$ is reached:
  \bea\label{eq21}
  f_{k+1} & = & \phi_{k+1}(0,h) \\ \nn
  & \geq & \phi_{k,1}\left(\frac{1}{2^{2\sigma}},h+\beta J m 2^{(k+1)(1-2\sigma)}\right)    -\frac{\beta J}{2}\left( 2^{(k+1)(1-2\sigma)} m^{2}+2^{-2(k+1)\sigma}\right)\\\nn
  & \geq & \cdots \\ \nn
  & \geq & \phi_{1,0}\left( \sum_{l=1}^{k} 2^{-2l\sigma}  ,h+\beta J m \sum_{l=2}^{k+1}2^{l(1-2\sigma)}\right)  +  \frac{\beta J}{2}\Bigg( \sum_{l=1}^{k+1} 2^{l(1-2\sigma)} m^{2} + \\ \nn
  && +  \sum_{l=1}^{k+1}2^{-2l\sigma}\Bigg).
  \eea
  By using  again Eqs. (\ref{eq13}), (\ref{eq11}),  (\ref{eq12}), (\ref{eq14}), (\ref{eq15}), Eq. (\ref{eq21}) leads to Eq. (\ref{eq:4}).
\end{proof}
Let us now compare the MF bound, Theorem \ref{thm:1}, with the NMF bound, Theorem \ref{thm:2}. In Theorem \ref{thm:1} the bound reminder is given by the OPF $\langle (M-m)^2 \rangle_t$. By rewriting the magnetization $M$ in terms of the magnetizations in the left and right blocks of $2^k$ spins $M_1 \equiv \sum_{i=1}^{2^k}S_i$, $M_2 \equiv \sum_{i=2^k+1}^{2^{k+1}}S_i$--namely $M=\frac{M_1+M_2}{2}$--we can write this reminder as $\frac{1}{2}\langle (M_1-m)^2 + (M_2-m)^2 \rangle_t + \langle (M_1-m)(M_2-m)\rangle_t$. In Theorem \ref{thm:2} the bound reminder is given only by $\langle (M_1-m)(M_2-m)\rangle_t$: The OPF within the left and right block--$\langle (M_1-m)^2\rangle_t$ and $\langle(M_2-m)^2 \rangle_t$ respectively--have been reabsorbed into an effective Hamiltonian of the left and right block, i.e. the term in brackets in Eq. (\ref{eq12}).  Hence, we expect the bound of Theorem \ref{thm:2} to improve upon the bound of Theorem \ref{thm:1}.  We explicitly show this in Fig. \ref{fig1}, where we plot the thermodynamic limit of the MF and NMF bound
\beas
\phi_{\textrm{MF}}(m) & \equiv & \lim_{k \rightarrow \infty} \phi_{k+1}^{\textrm{MF}}(m),\\
\phi_{\textrm{NMF}}(m) & \equiv & \lim_{k \rightarrow \infty} \phi_{k+1}^{\textrm{NMF}}(m),
\eeas
for a given value of $1/2 < \sigma \leq 1$, $\beta$ and $h=0$, and we show that
\[
\max_{m \in [-1,1]}\phi_{\textrm{MF}}(m) < \max_{m \in [-1,1]} \phi_{\textrm{NMF}}(m).
\]\\

\begin{figure}[h]
  \centering\includegraphics[scale=1.2]{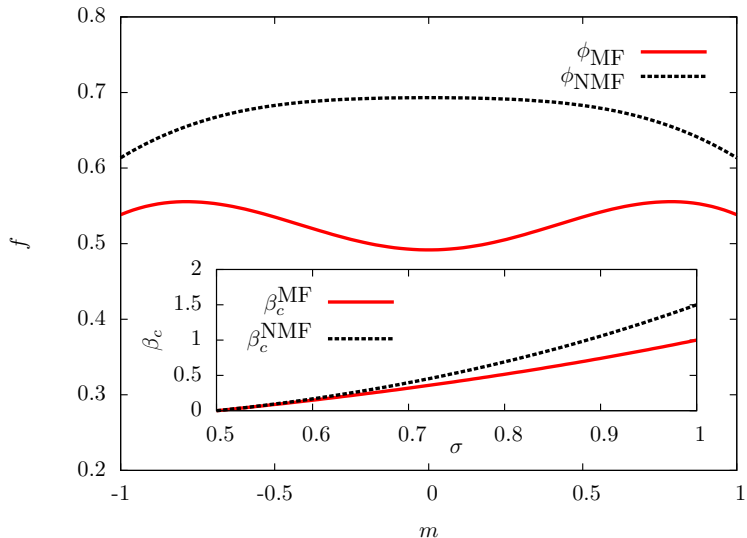}
  \caption{Mean-field  bound $\phi_{\textrm{MF}}$ and non-mean-field bound $\phi_{\textrm{NMF}}$ for the free energy of DHM as functions of $m$ for, $\beta =1$, $\sigma = 0.9$ and  $h=0$. Since $\beta_{c}^{\textrm{MF}} <\beta < \beta_{c}^{\textrm{NMF}}$, the mean-field bound is in the low-temperature phase ($\arg\max_{m \in [-1,1]}\phi_{\textrm{MF}}(m) \neq 0$), while the non-mean-field bound is in the high-temperature phase ($\arg\max_{m \in [-1,1]}\phi_{\textrm{NMF}}(m) = 0$). Inset: Inverse critical temperatures $\beta_{c}^{\textrm{MF}}$ and $\beta_{c}^{\textrm{NMF}}$ of the mean-field and non-mean-field bound respectively for  $h=0$ as functions of $1/2 < \sigma \leq 1$. \label{fig1}}
\end{figure}

It is easy to show that for both bounds there is a critical value $\beta_c$ of the inverse temperature $\beta$ such that the maximum of $\phi$ is realized for $m = 0 $ if $\beta \leq \beta_c$, while the maximum is realized for $m \neq 0$ if $\beta > \beta_c$. At this value of the inverse temperature, a ferromagnetic phase transition takes place  \cite{dyson1969existence}. From Eqs. (\ref{eq:9}), (\ref{eq:4}) it is straightforward to show that the inverse critical temperatures associated with  $\phi_{\textrm{MF}}$ and $\phi_{\textrm{NMF}}$ are  $\beta_{c}^{\textrm{MF}} = 2^{2\sigma-1}-1$ and $\beta_{c}^{\textrm{NMF}} =2^{1-2\sigma} -3+2^{2\sigma}$ respectively: These inverse critical temperatures are depicted in the inset of Fig. \ref{fig1} as functions of $\sigma$ in the interval $1/2 < \sigma \leq 1$ where the thermodynamic limit of DHM is well defined and where a finite-temperature phase transition is known to occur in the model \cite{dyson1969existence}. Given that the NMF bound (\ref{eq:9}) treats the spin-spin interactions between left and right blocks differently from the interactions within blocks, this bound accounts for a spatial structure in spin-spin couplings, in particular for the decrease of the interaction strength with distance. Differently, in the MF bound (\ref{eq:4}) inter-block and intra-block interactions are treated in the same way, and there is no hallmark of a spatial structure. Compared to a system with infinite-range couplings, a system whose interactions decrease with distance needs to be cooled down to lower temperatures to enter into the ordered phase: Hence, we expect the inverse critical temperature of the NMF bound to be smaller than that of the MF bound  \cite{griffiths1967correlationsIII}, as shown in the inset of Fig. \ref{fig1}.

\section{Conclusions and Outlook}

In this paper we studied two non-mean-field spin models built on a hierarchical lattice, the hierarchical Edwards-Anderson model (HEA) \cite{franz2009overlap} of a spin glass and Dyson's hierarchical model (DHM) \cite{dyson1969existence} of a ferromagnet. For the HEA, we proved the existence and self-averaging of the free energy in the thermodynamic limit. In addition, we have extended  to the HEA the mean-field (MF) replica-symmetry-breaking (RSB) bounds for the free energy first derived for the MF Sherrington-Kirkpatrick model of a spin glass.  We have then proposed a novel method to improve upon these MF bounds. We have applied this method to DHM, and we have shown that it provides a tighter free-energy bound compared to the MF one, and a value of the critical temperature closer to the exact one. To extend our method to the HEA, one needs to extend Griffith's correlation inequalities for Ising ferromagnets \cite{griffiths1967correlationsII} to hierarchical spin glasses, which we leave as a topic of future studies.

\section*{Acknowledgments}
M. C. is grateful to S. Franz for useful discussions, to NSF for funding through Grants PHY--0957573 and CCF--0939370, to the Human Frontiers Science Program, to the Swartz Foundation, and to the W. M. Keck Foundation for financial support.\\
 A. B. is grateful to MIUR for funding trough the grant FIRB RBFR08EKEV, and to Sapienza Universit\`{a} di Roma and to GNFM-INdAM for partial financial support.\\ F. G. is grateful to Sapienza Universit\`{a} di Roma and to INFN for partial financial support.

\bibliographystyle{unsrt}
\bibliography{bibliography}

\end{document}